
\documentstyle{article}

\def\noheaderplainsetup{%

      \topmargin=0pt \headheight=0pt \headsep=0pt  
      \oddsidemargin=18pt \evensidemargin=18pt       
      \textheight=8.9truein \textwidth=5.9truein}  

\noheaderplainsetup

\begin{document}


\newcommand{\col}[1]{\mbox{$#1$:}}
\newcommand{\xx}{\wp} 
\newcommand{\pintimpl}{\mbox{\hspace{2pt}\raisebox{0.033cm}{\tiny $>$}\hspace{-0.18cm} \raisebox{-0.043cm}{\large --}\hspace{2pt}}}
\newcommand{\chess}{\mbox{\em Chess}}
\newcommand{\pintimpli}{\mbox{\hspace{2pt}\raisebox{0.033cm}{\tiny $>$}\hspace{-0.18cm} \raisebox{-0.015cm}{\large -}\hspace{2pt}}}

\newcommand{\mla}{\mbox{{\Large $\wedge$}}}
\newcommand{\mle}{\mbox{{\Large $\vee$}}}
\newcommand{\psti}{\mbox{\raisebox{-0.02cm}
{\tiny $\wedge$}\hspace{-0.121cm}\raisebox{0.08cm}{\tiny $.$}\hspace{-0.079cm}\raisebox{0.10cm}
{\tiny $.$}\hspace{-0.079cm}\raisebox{0.12cm}{\tiny $.$}\hspace{-0.085cm}\raisebox{0.14cm}
{\tiny $.$}\hspace{-0.079cm}\raisebox{0.16cm}{\tiny $.$}\hspace{1pt}}}
\newcommand{\oo}{\bot}            
\newcommand{\pp}{\top}            
\newcommand{\hint}{\mbox{\bf Int\hspace{-2pt}}^{\intimpli}}
\newcommand{\hintp}{{\mbox{\bf Int\hspace{-2pt}}}^{\pintimpli}}
\newcommand{\hintb}{\mbox{\bf Int\hspace{-2pt}}^{{\intimpli}^{\aleph_0}}}

\newcommand{\pst}{\mbox{\raisebox{-0.01cm}{\scriptsize $\wedge$}\hspace{-4pt}\raisebox{0.16cm}{\tiny $\mid$}\hspace{2pt}}}
\newcommand{\gneg}{\neg}                  
\newcommand{\intimpl}{\mbox{\hspace{2pt}$\circ$\hspace{-0.14cm} \raisebox{-0.058cm}{\Large --}\hspace{2pt}}}
\newcommand{\bintimpl}{\intimpl^{\aleph_0}}
\newcommand{\intimpli}{\mbox{\hspace{2pt}$\circ$\hspace{-0.14cm} \raisebox{-0.031cm}{\Large -}\hspace{2pt}}}
\newcommand{\mli}{\rightarrow}                     
\newcommand{\mld}{\vee}    
\newcommand{\mlc}{\wedge}  
\newcommand{\ade}{\mbox{\Large $\sqcup$}}      
\newcommand{\ada}{\mbox{\Large $\sqcap$}}      
\newcommand{\add}{\sqcup}                      
\newcommand{\adc}{\sqcap}                      
\newcommand{\sti}{\mbox{\raisebox{-0.02cm}
{\scriptsize $\circ$}\hspace{-0.121cm}\raisebox{0.08cm}{\tiny $.$}\hspace{-0.079cm}\raisebox{0.10cm}
{\tiny $.$}\hspace{-0.079cm}\raisebox{0.12cm}{\tiny $.$}\hspace{-0.085cm}\raisebox{0.14cm}
{\tiny $.$}\hspace{-0.079cm}\raisebox{0.16cm}{\tiny $.$}\hspace{1pt}}}
\newcommand{\costi}{\mbox{\raisebox{0.08cm}
{\scriptsize $\circ$}\hspace{-0.121cm}\raisebox{-0.01cm}{\tiny $.$}\hspace{-0.079cm}\raisebox{0.01cm}
{\tiny $.$}\hspace{-0.079cm}\raisebox{0.03cm}{\tiny $.$}\hspace{-0.085cm}\raisebox{0.05cm}
{\tiny $.$}\hspace{-0.079cm}\raisebox{0.07cm}{\tiny $.$}\hspace{1pt}}}
\newcommand{\intf}{\$}               
\newcommand{\st}{\mbox{\raisebox{-0.05cm}{$\circ$}\hspace{-0.13cm}\raisebox{0.16cm}{\tiny $\mid$}\hspace{2pt}}}
\newcommand{\bst}{\st^{\aleph_0}}
\newcommand{\cost}{\mbox{\raisebox{0.12cm}{$\circ$}\hspace{-0.13cm}\raisebox{0.02cm}{\tiny $\mid$}\hspace{2pt}}}
\newcommand{\pcost}{\mbox{\raisebox{0.12cm}{\scriptsize $\vee$}\hspace{-4pt}\raisebox{0.02cm}{\tiny $\mid$}\hspace{2pt}}}


\newtheorem{theoremm}{Theorem}[section]
\newtheorem{factt}[theoremm]{Fact}
\newtheorem{thesiss}[theoremm]{Thesis}
\newtheorem{definitionn}[theoremm]{Definition}
\newtheorem{figuree}[theoremm]{Figure}
\newtheorem{lemmaa}[theoremm]{Lemma}
\newtheorem{propositionn}[theoremm]{Proposition}
\newtheorem{conventionn}[theoremm]{Convention}
\newtheorem{examplee}[theoremm]{Example}
\newtheorem{remarkk}[theoremm]{Remark}
\newtheorem{conjecturee}[theoremm]{Conjecture}
\newtheorem{claimm}[theoremm]{Claim}
\newtheorem{corollaryy}[theoremm]{Corollary}

\newenvironment{definition}{\begin{definitionn} \em}{ \end{definitionn}}
\newenvironment{theorem}{\begin{theoremm}}{\end{theoremm}}
\newenvironment{lemma}{\begin{lemmaa}}{\end{lemmaa}}
\newenvironment{corollary}{\begin{corollaryy}}{\end{corollaryy}}
\newenvironment{proposition}{\begin{propositionn} }{\end{propositionn}}
\newenvironment{convention}{\begin{conventionn} \em}{\end{conventionn}}
\newenvironment{remark}{\begin{remarkk} \em}{\end{remarkk}}
\newenvironment{proof}{ {\bf Proof.} }{\  $\Box$ \vspace{.1in} }
\newenvironment{idea}{ {\bf Proof idea.} }{\  $\Diamond$ \vspace{.1in} }
\newenvironment{conjecture}{\begin{conjecturee} }{\end{conjecturee}}
\newenvironment{claim}{\begin{claimm} }{\end{claimm}}
\newenvironment{thesis}{\begin{thesiss} }{\end{thesiss}}
\newenvironment{example}{\begin{examplee} \em}{\end{examplee}}
\newenvironment{figuret}{\begin{figuree} \em}{\end{figuree}}
\newenvironment{fact}{\begin{factt} \em}{\end{factt}}

\title{Many concepts and two logics of algorithmic reduction}
\author{Giorgi Japaridze
\\ \\ Institute of Artificial Intelligence, Xiamen University and
\\ Department of Computing Sciences, Villanova University} 
\date{}
\maketitle

\begin{abstract} Within the program of finding axiomatizations for various parts of {\em computability logic}, it was  proven earlier that the logic of interactive Turing reduction is exactly the implicative fragment of Heyting's intuitionistic calculus. That sort of reduction permits unlimited reusage of the computational resource represented by the antecedent.  An at least equally basic and natural sort of algorithmic reduction, however, is the one that does not allow such reusage. The present article shows that turning the logic of the first sort of reduction  into the logic of the second sort of reduction takes nothing more than just deleting the contraction rule from its Gentzen-style axiomatization. The first (Turing) sort of interactive reduction is also shown to come in three natural versions. While those three versions are very different from each other, their logical behaviors (in isolation) turn out to be indistinguishable, with that common behavior being precisely captured by implicative intuitionistic logic. Among the other contributions of the present article is an informal introduction of a series of new --- finite and bounded --- versions of recurrence operations and the associated reduction operations.   
\end{abstract}

\noindent {\em MSC}: primary: 03B47; secondary: 03F50; 03B70; 68Q10; 68T27; 68T30; 91A05 


\noindent {\em Keywords}: Computability logic; Intuitionistic logic; Affine logic; Linear logic; Interactive computation; Game semantics.

\section{Introduction}\label{sintr}

This article is a new addition to the evolving list of papers \cite{Japtocl1,Japtocl2,Japtcs,Cirq,Japjsl,Japtcs2,int1,Propint,deep,seq,ar,Ver}
devoted to finding axiomatizations for various fragments of {\em computability logic}. The latter is a  program for redeveloping logic as a formal theory of computability, as opposed to a formal theory of truth which it has more traditionally been. 

Under the approach of computability logic, formulas express interactive computational problems defined as games between the two players $\pp$ ({\em machine}) and $\oo$ ({\em environment}), with logical operators standing for basic operations on games. 
``Truth'' of a problem/game means existence of an algorithmic solution, i.e. $\pp$'s effective winning strategy. And validity of a logical formula is understood as (such) truth under every particular interpretation of atoms. 
 With this semantics, computability logic provides a systematic answer to the fundamental question ``{\em what can be computed?}\hspace{1pt}'', just as classical logic is a systematic tool for telling what is true. Furthermore, as it turns out, in positive cases ``{\em what} can be computed'' always allows itself to be replaced by ``{\em how} can be computed'', which makes computability logic of potential interest in not only theoretical computer science, but many more applied areas as well, including interactive knowledge base systems, resource oriented systems for planning and action, or declarative programming languages. 
On the logical side, computability logic can serve as a constructive and computationally meaningful alternative to classical logic as a basis for applied theories.
The first concrete steps in the direction of materializing this potential have been made very recently in \cite{ar}, where a computability-logic-based system of arithmetic was constructed --- a formal theory whose every formula expresses a computational problem and every proof encodes an algorithmic solution for such a problem, thus fully reducing problem-solving to theorem-proving. 

Having said the above, motivationally or technically (re)introducing computability logic is not within the scope of the present paper. This job has been done in  
 \cite{Jap03,Japic,Japfin}, and the present paper, whose goal is merely putting one more brick into the foundation of the edifice under construction, 
primarily targets readers already familiar with the basics of computability logic. Yet, as it happens, the proof of the main technical result of the paper, given in Sections 2-4, can be understood in full detail without knowing much (if anything at all) about computability logic. Those with no prior acquaintance with the subject may  benefit from browsing the rest of the paper just as well. Even though doing so would be certainly insufficient for getting full insights into the project, chances are that such a reader may at least start feeling curious enough to be willing to look at some additional literature. The most recommended reading for familiarity with the basic philosophy, motivations, concepts and techniques of computability logic is the tutorial-style \cite{Japfin}. 

Here we very quickly review, in a simplified form, certain basic concepts on the games used in computability logic, to refresh the memory of those previously exposed to the subject, and to provide some clues to those who have never seen it. 

A {\bf move} means a finite string over some fixed alphabet, such as the standard keyboard alphabet. A {\bf labeled move} is a move prefixed with $\pp$ or $\oo$. The meaning of such a prefix (``{\em label}'') is to indicate which of the two players has made the move. A {\bf run} is a (finite or infinite) sequence of labeled moves, and a {\bf position} is a finite run. Runs (and positions as special cases of runs) are thus records of interaction histories, spelling out what moves, in what order and by which players have been made during a given play of a game. 

A {\bf game}\footnote{To what we refer as a ``game'' in this paper, is in fact called a ``constant game'' in computability logic, and the term ``game'' is reserved for a slightly more general concept. Considering only constant games is sufficient for our present purposes though and, to
 keep things simple, we are using the term ``game'' for them.} is a pair $(\mbox{\bf Lr},\mbox{\bf Wn})$ consisting of what are called its {\bf structure} ({\bf Lr}) and {\bf content} ({\bf Wn}). One of the many equivalent ways to define the structure component of a game is to 
say that it is a binary relation between positions and labeled moves. Then the intuitive meaning of $\mbox{\bf Lr}(\Phi,\xx\alpha)$ is that $\alpha$ is a {\bf legal move} by player $\xx$ in position $\Phi$. A run where all moves are legal (in the positions preceding those moves) is 
said to be a {\bf legal run}. The empty run is thus always trivially legal. As expected, ``{\bf illegal}'', whether it be a move or a run, means ``not legal''. As for the content {\bf Wn} of a game, it can be defined as a set of legal runs, whose elements are said to be (and 
intuitively thought of as) the runs {\bf won} by player $\pp$ (and hence lost by $\oo$), with all other legal runs considered {\bf lost} by $\pp$ (and hence won by $\oo$). As for illegal runs, they are always considered to be lost by the player who made the first illegal move. 

Note the relaxed nature of such games. There are no conditions on the order in which moves should or could be made (such as, say, strict alternation of players'  turns), and generally either player may have legal moves in a given position/situation. This makes the games of computability logic a rather direct (without any ``bureaucratic pollutants'') and flexible tool for modeling interaction, including asynchronous interactions. The relaxed nature of our games  makes it impossible to understand game-playing strategies as functions from positions to moves, as this is typical for most other game models. Instead, ($\pp$'s effective) strategies are understood as interactive machines. Such a machine is nothing but a Turing machine with the additional capability of making moves. The adversary can also move at any time, with such moves being the only nondeterministic events from the machine's perspective. The play is fully visible to the machine through an additional, read-only {\em run tape} which, at any time, spells the ``current position'' of the play. We say that such a machine {\bf wins} a given game iff, no matter how the adversary acts (what moves it makes and when it makes them), the run incrementally spelled on the run tape is won by $\pp$.

A universal-utility game semantics should be  about interaction, whereas functions are inherently non-interactive. The above-mentioned traditional, {\em strategies-as-functions}, approach misses this important point and creates a hybrid of interactive (games) and non-interactive (functions) entities. To see the resulting loss, it would be sufficient to reflect on the behavior of one's personal computer. The job of your computer is to play one long --- potentially infinite --- game against you. Now, have you noticed your ``adversary'' getting slower every time you use it? Probably not. That is because the computer is smart enough to follow a non-functional strategy in this game. If its strategy was a function from positions (interaction histories) to moves, the response time would inevitably keep worsening due to the need to read the entire --- continuously lengthening and, in fact, practically infinite --- interaction history every time before responding. Defining strategies as functions of only the latest moves (rather than entire interaction histories) in Abramsky and Jagadeesan's \cite{Abr94} tradition  is also not a way out, as typically more than just the last move matters. Back to your personal computer, its actions certainly depend on more than 
your last keystroke. Thus, the difference between the traditional {\em functional} strategies and the {\em post-functional} strategies of computability logic is not just a matter of taste or convenience. It will become especially important when it comes to (yet to be developed) interactive complexity theory: hardly any meaningful interactive complexity theory can be done with the strategies-as-functions approach. And complexity issues will inevitably come forward when computability logic or similar approaches achieve a certain degree of maturity: nowadays, 95\% of the theory of computation is about complexity rather than just computability.
Time has not yet matured for seriously addressing complexity issues within the framework of computability logic though, and the latter, including the present paper, continues to be focused on just computability, which still abounds with open questions waiting for answers.  

In the above outline, we described interactive Turing machines in a relaxed fashion, leaving to the reader filling technical details about, say, how, exactly, moves are made by the machine, how many moves either player can make at once, what happens if both players attempt to move ``simultaneously'', etc. As it turns out, all reasonable design choices yield the same class of winnable games as long as we consider a certain natural subclass of games called {\bf static}. Such games are obtained by imposing a certain simple formal condition on games (see, e.g., Section 5 of \cite{Japfin}), which we do not reproduce here as nothing in this paper relies on it. We will only point out that, intuitively, static games are interactive tasks where the relative speeds of the players are irrelevant, as it never hurts a player to postpone making moves. In other words, static games are the games that are contests of intellect rather than contests of speed. And one of the theses that computability logic philosophically relies on is that static games present an adequate formal counterpart of our intuitive concept of ``pure'', speed-independent interactive computational problems. Correspondingly, computability logic restricts its attention (more specifically, possible interpretations of the atoms of its formal language) to static games. Needless to say, the class of static games is closed under all game operations studied in computability logic.

Among the most interesting of such operations are several versions of {\em reduction}. 
The  simplest form of reduction (of $B$ to $A$) is $A\mli B$. This is a parallel play of the two games $A$ and $B$ with the roles of $\pp$ and $\oo$ 
interchanged in the antecedent. Winning a given run of this game  for $\pp$ means that whenever the adversary wins $A$, $\pp$ has to win $B$.

More formally, every legal move of $A\mli B$ has to be prefixed with one of the two strings ``$0.$'' or ``$1.$'' to indicate in which of the two components the move is made. The effect of a move $0.\alpha$ is making move $\alpha$ in the antecedent, and the effect of $1.\alpha$ is making move $\alpha$ in the consequent. In order for such a move $0.\alpha$ or $1.\alpha$ to be legal, $\alpha$ should be a legal move in (the corresponding position of) the corresponding component $A$ or $B$. Then a legal run $\Gamma$ of $A\mli B$ is considered won by $\pp$ iff $\Gamma^{1.}$ is a $\pp$-won run of $B$ or $\gneg\Gamma^{0.}$ is a $\oo$-won run of $A$. Here $\Gamma^{1.}$ means the result of deleting from $\Gamma$ all moves except those of the form $1.\alpha$, and then further deleting the prefix ``$1.$'' in such moves. Similarly for $\Gamma^{0.}$. And $\gneg\Gamma^{0.}$ means the result of turning upside down (so that $\pp$ becomes $\oo$ and vice versa) all labels in $\Gamma^{0.}$. 

As can be felt from the above passage, formal definitions may not be as nice to work with as informal or intuitive explanations. For this reason, 
our subsequent explanations of game operations in this section will be limited to informal ones, keeping in mind that they certainly can be turned into strict technical definitions.     

Since the roles of the players are switched in the antecedent of $A\mli B$, the $A$ component, as a computational problem from $\oo$'s perspective, becomes a computational resource for $\pp$. Namely, $\pp$ can observe how the adversary is solving/playing (a single session) of $A$, and utilize that information in its own solving/playing $B$. 
The following example illustrates the above-said. Let
$\mbox{\bf H}$ be the {\em halting problem}, which can be understood as a game of depth $2$ (i.e., no legal run has more than two moves). In the initial (empty) position of this game, only $\oo$ has legal moves, and such a move should be the phrase ``Does Turing machine $m$ halt on input $i$?'', where $m$ is a legitimate description of a Turing machine and $i$ a possible input for it. After such a move is made, the second and last legal move is by $\pp$, which should be either ``Yes'' or ``No''. $\pp$ wins iff it correctly answers the question asked by $\oo$. The failure by $\oo$ to make an initial move is considered $\pp$'s win, as there was no question to answer. And, if such a move is made, then the failure of $\pp$ to respond is considered $\oo$'s win. The {\em acceptance problem} {\bf A} is similar, only it is about whether a given machine accepts (rather than halts on) a given input. Neither {\bf H} nor {\bf A} is decidable, which obviously means that these problems, as games, have no algorithmic winning strategies. However, {\bf A} is algorithmically reducible to {\bf H}. Specifically, $\pp$ {\em does} have an effective winning strategy in the game $\mbox{\bf H}\mli \mbox{\bf A}$, which goes like this. Wait till, in the consequent,  $\oo$ asks a question regarding whether a certain machine $m$ accepts a certain input $i$. Then, in the antecedent, ask a counterquestion regarding whether $m$ halts on $i$ (the same $m$ and $i$). If an answer to this counterquestion is ``No'', answer ``No'' to the original question in the consequent and rest your case, as not halting implies not accepting. Otherwise, if the answer in the antecedent is ``Yes'', simulate machine $m$ on input $i$ until it halts, and say ``Yes'' or ``No'' in the consequent depending on whether the simulation accepted or rejected. (Of course, the possibility that the simulation goes  on forever is not ruled out here; but this would mean that $m$ does not really halt on $i$, and having lied in the antecedent would make $\oo$ lose the game regardless of what happens in the consequent).

In fact, $\mli$ is not only the simplest but also the strongest form of reduction. In this respect, at  the other extreme is the weakest reduction $\intimpl$. 
The game $A\intimpl B$ can be characterized in the same intuitive terms as $A\mli B$, with the difference that, in $A\intimpl B$, unlike $A\mli B$, $\pp$ is allowed to reuse $A$ (as a computational resource) any number of times, with ``reuse'' here understood in the strongest algorithmic sense possible. Namely, at any time, $\pp$ can temporarily abandon  a given position of $A$ (while reserving the right to come back to it later), backtrack to any of the earlier positions of it and try a different continuation from there, thus forcing $\oo$ to play multiple parallel sessions of $A$ against such a capricious adversary in this most unfair game: 
the failure of $\oo$ to win $A$ in {\em all} sessions of it automatically results in $\pp$'s victory.  

A while ago we saw how to reduce the acceptance problem to the halting problem in the strong sense of $\mli$. We would not have been just as successful if instead of the acceptance problem {\bf A} we had taken the {\em Kolmogorov complexity} problem {\bf K}, where the initial move of the form ``What is the Kolmogorov complexity of number $n$?'' is by the environment, obligating the machine to respond with a move ``$m$'' such that $m$ is (indeed) the Kolmogorov complexity of $n$, i.e., $m$ is the size of the smallest Turing machine that returns $n$ on input $0$. One can show that, unlike $\mbox{\bf H}\mli\mbox{\bf A}$, the game $\mbox{\bf H}\mli\mbox{\bf K}$ does not have an algorithmic winning strategy. But the weaker game $\mbox{\bf H}\intimpl\mbox{\bf K}$ certainly does, due to the fact that, in it, the reduction is allowed to use the antecedent repeatedly. Such a strategy goes like this. Wait to hear a question about the Kolmogorov complexity of a number $n$ in the consequent. Then, starting from $m=0$, do the following. Duplicate the original antecedent and save one copy of it for future usage (further duplications). In the other copy, ask the counterquestion regarding whether the machine (encoded by) $m$ halts on input $0$. If you hear ``No'', increment $m$ to $m+1$ and repeat the step. Otherwise, if you hear ``Yes'', simulate $m$ on input $0$; if the simulation shows that $m$ returns $n$ on input $0$, answer $|m|$ (where $|m|$ is the size of $m$) to the original question in the consequent, and wash your hands. In any other case,  increment $m$ to $m+1$ and repeat the step.

There is a whole spectrum of natural reduction operations of intermediate strength between $\mli$ and $\intimpl$. Only some of those have been officially introduced within the framework of computability logic so far, with more to be probably defined later depending on particular needs, motivations and tastes. It has been repeatedly pointed out earlier that the formalism of computability logic is open-ended, welcoming any meaningful augmentations. 

Among the most natural and simple reduction operations of intermediate strength  is $\pintimpl$. Just like $A\intimpl B$ and unlike $A\mli B$, the game $A\pintimpl B$ allows $\pp$ to reuse $A$ infinitely many times. But the form of reusage is less flexible here:  $\pp$ is required to restart $A$ from the very beginning every time it wants to reuse it, meaning that  it essentially cannot utilize the advantages of backtracking permitted in $A\intimpl B$. Specifically, unless 
$\oo$ plays in exactly the same ways in different parallel sessions of $A$, $\pp$ has no possibility to experiment with different reactions to the same actions by the adversary.

Thus, the difference between $A\pintimpl B$ and $A\intimpl B$ is in the allowed {\em type} of reusage of $A$, with the {\em quantity} of reusages being otherwise unlimited. Yet, as it happens, this difference in the types of reusage automatically yields a difference in the quantities as well. Specifically, in $A\pintimpl B$ at most countably many parallel runs of $A$ can be generated, while in $A\intimpl B$, when $A$ has infinitely long legal runs, that quantity can be a continuum. A very simple modification in the formal definition of $\intimpl$, given later in Section \ref{s5}, turns it into a definition of the Blass-style 
\cite{Bla72} reduction $\bintimpl$, by its strength strictly between $\pintimpl$ and $\intimpl$. The type of reusage of $A$ in $A\intimpl^{\aleph_0} B$ is the same as in $A\intimpl B$, but the quantity of reusages is limited to the countably infinite cardinal $\aleph_0$. 

The operation $\intimpl^{\aleph_0}$ is apparently the weakest nontrivial strengthening of $\intimpl$. Both $\intimpl$ and $\pintimpl$ can be further  strengthened to $\intimpl^F$ and $\pintimpl^F$ by allowing $\pp$  to reuse the antecedent only a finite (yet unbounded) number of times. In turn, the operations $\intimpl^F$ and $\pintimpl^F$ can be further strengthened to bounded versions of $\intimpl,\pintimpl$. The simplest form of a bounded version of $\supset\in\{\pintimpl,\intimpl\}$ would be $\supset^n$, where $n$ is a natural number. It means the same as $\supset$, only the number of allowed (re)usages of the antecedent is limited to $n$, so that $A\supset^0\hspace{-3pt} B$ is nothing but simply $B$, and 
$A\supset^1 \hspace{-3pt} B$ is nothing but $A\mli B$. But bounds do not necessarily have to be natural numbers. Reasonable transfinite ordinals could be interesting to study as well, such as ordinals less than $\epsilon_0$. For example, where $\omega$ is the smallest infinite ordinal, $A\supset^{\omega}\hspace{-3pt} B$ would mean a game where $\pp$ has to declare a number $n$ before starting using $A$, after which the game continues as if it was $A\supset^n \hspace{-3pt}B$. This generalizes to  $A\supset^{k\omega}\hspace{-3pt} B$ for any $k\geq 0$, where $k$ (rather than just one)
declarations $n_1,\ldots,n_k$ are made. The first declaration $n_1$ opens $n_1$ copies of $A$ for usage; the second declaration $n_2$, which can be made any time later when the previously ``activated'' copies of $A$ are perhaps already at advanced stages, creates the possibility to use $n_2$ additional copies; the third declaration activates $n_3$ additional copies, etc., with the overall number of (re)usages of $A$ thus eventually not exceeding the finite $n_1+\ldots+n_k$. 
 Next,  $A\supset^{\omega^2}\hspace{-5pt} B$ would be a game where $\pp$ has to declare a number $n$ before starting using $A$, after which the play continues as it would proceed in $A\supset^{n\omega}\hspace{-3pt} B$. This can be further generalized to $A\supset^{\omega^k}\hspace{-4pt} B$ for any $k\geq 0$. Then $A\supset^{\omega^\omega} \hspace{-4pt} B$ could be characterized as 
a game where $\pp$'s initial choice of $n$ turns it into a game that proceeds as $A\supset^{\omega^n}\hspace{-4pt} B$. And so on and so on. 

Furthermore, $\intimpl$ has an even greater variety of bounded versions of potential interest, especially in the (yet to be developed) area of  interactive computational complexity theory. One may want to differentiate between just bounds on the overall number of reusages of the antecedent and bounds on, say, the ``depths'' of reusages. Roughly, 
the depth of reusages here means the maximum number of ancestor positions of any given run of the antecedent at which restarts (``forkings'', ``replications'') happened. In more precise terms --- for those familiar with the relevant formal definitions --- such bounds would mean bounds on the heights of the 
corresponding underlying bitstring trees (see \cite{Japfin}). For $\pintimpl$, on the other hand, the above concept of depth is not meaningful as it automatically trivializes to $1$ (or to $0$, depending on whether or not only proper reusages count). 

Finite or bounded versions of reduction operations, except the ``most finite'' and ``most bounded'' $\mli$, have never been studied, and at this point we do not know what logics they induce. In what follows our focus is only on $\mli, \pintimpl, \intimpl^{\aleph_0},\intimpl$.

Of these four operations, $\intimpl$ stands out as, in a sense, most natural and important. What makes $\intimpl$ special is that it has good claims to precisely capture everything that anyone would ever call (interactive) algorithmic reduction. That is in the same sense as Turing computability 
of functions captures our intuitive concept of effectiveness.  What also makes $\intimpl$ natural is that, as suggested by the above characterizations, definitions of other reductions  can be easily obtained from the definition of $\intimpl$ by imposing corresponding restrictions on the form and quantity of reusage of the antecedent, with $\mli$ being the most extreme nontrivial case, where any proper reusage is simply forbidden altogether. 

Alternatively, we can consider $\mli$ rather than $\intimpl$ as the basic sort of reduction, and define all weaker versions of reduction in terms of $\mli$ and what are called {\em recurrence operations} ($\pst,\st^{\aleph_0},\st,\ldots$), in their general spirit resembling the storage operator $!$ of linear logic. This is exactly the approach that computability logic has prefered to take so far.\footnote{In fact, computability logic further decomposes $\mli$, defining $A\mli B$ as $\gneg A\mld B$.} 
For instance,  \cite{Japfin} treats $A\pintimpl B$ and $A\intimpl B$ as abbreviations of $\pst A\mli B$ and $\st A\mli B$, respectively. This sort of a decomposition of weak implication-style operators looks well familiar from linear logic \cite{Gir87}, or the even earlier work \cite{Bla72} by Blass. So, the above discussion of various new sorts of reduction can be in fact considered an informal introduction of the corresponding series of new recurrence operations. For this reason, and also for the (related) reason of being the only fully resource-conscious reduction, the operation $\mli$ is at least as important, basic and natural as $\intimpl$. 

The operations $\pintimpl$, $\bintimpl$ and $\intimpl$ equally enjoy the status of being conservative generalizations of Turing reduction for the interactive context. Specifically, when $A$ and $B$ are traditional sorts of problems such as decision problems or problems of computing a function, effective winnability of any of the three   games $A\pintimpl B$,  $A\bintimpl B$, $A\intimpl B$ turns out to coincide with Turing reducibility of $B$ to $A$, with the subtle differences between $\pintimpl,\bintimpl,\intimpl$ becoming relevant only when these operators are applied to problems with higher degrees of interactivity. The same does not extend to $A\mli B$ though:
(even) when $A$ and $B$ are traditional sorts of problems, effective winnability of $A\mli B$ means something properly stronger. It means existence of a Turing machine that solves $B$ with an oracle for $A$ where the oracle can be queried only once (while, as we know, ordinary Turing reducibility does not impose any limits on how many times the oracle can be queried). The earlier mentioned finite versions $\pintimpl^F$ and $\intimpl^F$ of weak reductions can also be seen to be conservative generalizations of Turing reduction. As for the bounded versions of weak reductions, they generalize certain proper strengthenings of Turing reduction, obtained by imposing (finite or transfinite) bounds on the number of possible queries of the oracle. Going back to our Kolmogorov complexity example, the game $\mbox{\bf H}\supset \mbox{\bf K}$ has an algorithmic winning strategy for each $\supset\in\{\pintimpl,\intimpl,\bintimpl,\pintimpl^F,\intimpl^F\}$. Furthermore, with some thought and keeping in mind the known fact that the Kolmogorov complexity of $n$ never exceeds  $n$ itself (for the exception of a finite number of ``very small'' $n$'s), one can see that $\mbox{\bf H}\supset \mbox{\bf K}$ remains algorithmically solvable with either $\supset\in\{\pintimpl^\omega,\intimpl^\omega\}$ as well.  

As it turns out, the logical behaviors of $\pintimpl$, $\bintimpl$ and $\intimpl$ are indistinguishable when these operators are taken in isolation, and that common behavior is precisely captured by the implicative fragment  $\mbox{\bf Int}^\supset$ of Heyting's intuitionistic calculus. For $\pintimpl$ and $\intimpl$, a proof of this fact was given in \cite{Japjsl}. And the present paper extends that result to $\bintimpl$ as well.
As for $\mli$, it turns out that its logical behavior is captured by {\bf CL7}, which is (the Gentzen-style axiomatization of) ${\bf Int}^\supset$ with just the contraction rule deleted. In other words, {\bf CL7} is nothing but the implicative fragment of affine logic. 
A proof of this result is the main technical contribution of the present article. And this is not a result that could be taken for granted.  As shown in 
\cite{Japfin}, affine logic in its full language, while sound, is far from being complete with respect to the semantics of computability logic. In fact, even just the 
$(\mli,\gneg)$-fragment of  computability logic is not the same as the corresponding fragment of affine logic, nor does it appear to be axiomatizable in 
traditional proof theory. 

Another way to summarize the main technical result of the present paper is to say that the set of implicative binary tautologies and their substitutional instances is precisely described by {\bf CL7}. Here {\em binary tautologies} mean tautologies of classical logic where no atom occurs more than twice, and 
 {\em implicative binary tautologies} are  binary tautologies that contain no connectives others than $\mli$. Binary tautologies and their instances have arisen in the past as a class of formulas sound and complete with respect to several natural semantics, most notably Blass's game semantics for linear logic \cite{Bla92}, Blass's resource-conscious semantics for classical logic \cite{Bla03}, the semantics of computability logic \cite{Jap03}, and abstract resource semantics \cite{Cirq,deep}. This class of formulas has stubbornly resisted any axiomatization attempts within the framework of traditional deductive approaches and,  as argued by Blass in \cite{Bla92}, apparently this phenomenon is not quite an accident. 
A reasonable axiomatization for the set of binary tautologies and their instances was eventually found in \cite{Cirq}, but it took switching to a substantially new deductive framework called {\em cirquent calculus} (roughly, it is sequent calculus where formulas may be shared between different sequents), indirectly corroborating Blass's thesis that binary tautologies are foreign to traditional proof theory.  Against this background, the fact that the implicative fragment of that wild class can still be tamed with traditional means such as substructural sequent calculus in which {\bf CL7} is constructed, is worth receiving our attention. 

\section{Logic {\bf CL7}}\label{s2}
The languages that we consider in this paper have infinitely many nonlogical propositional atoms for which we use the metavariables $P,Q$, and have no logical atoms. Where $\supset\in\{\mli,\pintimpl,\bintimpl,\intimpl\}$, by a {\bf $\supset$-formula} we mean a formula built from atoms and (the binary) $\supset$ in the standard way. We will be using $E,F,G,H$ as metavariables for formulas, and $\Gamma,\Delta$ as metavariables for (possibly empty) multisets of formulas. As usual, we write $\Gamma,\Delta$ or $\Gamma,F$ instead of  $\Gamma\cup \Delta$  or $\Gamma\cup\{F\}$. A (two-sided) {\bf $\supset$-sequent} is 
a pair $\Gamma\Rightarrow F$, where $\Gamma$ is a finite multiset of $\supset$-formulas and $F$ is a $\supset$-formula. Here $\Gamma$ is said to be the {\bf antecedent} of the sequent, and $F$ is said to be the {\bf succedent}.

We axiomatize {\bf CL7} using two-sided $\mli$-sequents. A \mbox{($\mli$-)} formula $H$ is considered provable in this system (written \mbox{${\bf CL7}\vdash H$)} iff the empty-antecedent sequent $\Rightarrow H$ is so. 

The {\bf axioms} of {\bf CL7} are all $\mli$-sequents of the form \[\Gamma,\hspace{2pt}F\ \Rightarrow \ F.\]
And the system only has the following two {\bf rules of inference}:

\begin{center}
\begin{picture}(280,70)
\put(223,30){$\Gamma,\hspace{2pt}E\ \Rightarrow \ F$}
\put(227,50){\bf Right $\mli$}
\put(220,22){\line(1,0){55}}
\put(220,8){$\Gamma\ \Rightarrow\  E\hspace{-2pt}\mli\hspace{-3pt} F$}

\put(0,30){$\Gamma,\hspace{2pt}F\ \Rightarrow \ G$}
\put(86,30){$\Delta\ \Rightarrow \ E$}
\put(50,50){\bf Left $\mli$}
\put(0,22){\line(1,0){122}}
\put(18,8){$\Gamma,\hspace{2pt}\Delta,\hspace{2pt}E\hspace{-2pt}\mli\hspace{-3pt} F\ \Rightarrow \ G$}
\end{picture}
\end{center}

We say that a formula of classical propositional logic (with $\mli$-formulas here also seen as such) is {\bf binary} iff 
no atom occurs in it more than twice. The concepts of being binary, tautological, true or false extend from formulas to sequents by understanding each sequent $E_1,\ldots,E_n\Rightarrow F$ as the formula $E_1\mlc\ldots \mlc E_n\mli F$. A (substitutional) {\bf instance} of a given formula $F$, as usual, means the result of replacing atoms in $F$ by any formulas, with all occurrences of the same atom being replaced by the same formula, of course.   

\begin{theorem}\label{th}
For any $\mli$-formula $H$, the following conditions are equivalent:

(i) \ \ ${\bf CL7}\vdash H$.

(ii) \ $H$ is an instance of a binary tautology.

(iii) $H$ is valid in computability logic, whether it be in the ordinary sense of validity or in the stronger sense of 
 what is called ``uniform validity'' (see \cite{Japfin}).   
\end{theorem}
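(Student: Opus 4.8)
The plan is to prove the three conditions equivalent by closing the cycle $(ii)\Rightarrow(i)\Rightarrow(iii)\Rightarrow(ii)$. Inside clause $(iii)$ the uniform-validity version trivially entails the ordinary-validity one, so it is enough to obtain the \emph{uniform} version from $(i)$ and to obtain $(ii)$ from merely the \emph{ordinary} version; the implication $(i)\Rightarrow(ii)$ then comes for free. Of the three links, $(i)\Rightarrow(iii)$ is routine soundness and $(iii)\Rightarrow(ii)$ can be quoted from the literature, so essentially the whole content of the theorem sits in $(ii)\Rightarrow(i)$.

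For soundness $(i)\Rightarrow(iii)$, read a sequent $E_1,\ldots,E_n\Rightarrow F$ as the game $E_1\mlc\cdots\mlc E_n\mli F$ --- which, up to the associativity and currying that uncurrying relies on, is literally $E_1\mli(\cdots\mli(E_n\mli F))$ --- and show by induction on ${\bf CL7}$-derivations that every provable sequent denotes a uniformly valid game. The axioms $\Gamma,F\Rightarrow F$ are uniformly won by copying moves between the two occurrences of $F$ while ignoring $\Gamma$; the premise and conclusion of Right $\mli$ denote the same game; and Left $\mli$ is the familiar cut-style composition of strategies, whose preservation of uniform validity is one of the standard tools of the subject (see \cite{Japfin,Jap03}). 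No contraction-like step is ever required --- each antecedent resource is consumed once --- which is the semantic reason why the sound logic here is ${\bf CL7}$ rather than the full implicative intuitionistic calculus. For $(iii)\Rightarrow(ii)$ one invokes the known necessary condition for validity: a formula of this fragment that is valid in computability logic must be an instance of a binary tautology (\cite{Jap03,Cirq,deep}), and only ordinary validity is used there.

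It remains to prove $(ii)\Rightarrow(i)$. First, ${\bf CL7}$-provability is closed under substitution --- applying a substitution to every formula of a derivation carries axioms to axioms and rule instances to rule instances --- and, since $H$ is an $\mli$-formula, any binary tautology of which $H$ is an instance is itself built from atoms and $\mli$ alone (a substitution cannot erase a connective), hence is an implicative binary tautology. So it suffices to show that \emph{every implicative binary tautology is provable in ${\bf CL7}$}. I would argue by induction on the number of occurrences of $\mli$ (handling Right $\mli$ steps first, so that it is enough to treat sequents $\Gamma\Rightarrow P$ with $P$ atomic). Uncurrying, a non-atomic implicative formula has the shape $G_1\mli\cdots\mli G_k\mli P$ with $P$ atomic and $k\geq 1$; $k$ applications of Right $\mli$ reduce the goal to the sequent $\Gamma\Rightarrow P$, $\Gamma=\{G_1,\ldots,G_k\}$, which read as a formula is again a binary tautology. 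If $P\in\Gamma$ this is an axiom; otherwise every $G\in\Gamma$ that is an atom is distinct from $P$, the valuation making all such atoms true and $P$ false shows some $G_{i_0}=E\mli F$ in $\Gamma$ is non-atomic, and I apply Left $\mli$ to $G_{i_0}$ with a partition $\Gamma\setminus\{G_{i_0}\}=\Gamma_1\sqcup\Gamma_2$, obtaining premises $\Gamma_1,F\Rightarrow P$ and $\Gamma_2\Rightarrow E$. Any such partition automatically keeps both premises binary and of strictly smaller $\mli$-size, so the induction hypothesis closes the argument the moment $G_{i_0}$ and the partition are chosen so that both premises are again tautologies.

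That choice is the crux, and I would isolate it as a \emph{Splitting Lemma}: for any binary implicative tautology $\Gamma\Rightarrow P$ with $P$ atomic and $P\notin\Gamma$, there are $G_{i_0}=E\mli F\in\Gamma$ and a partition $\Gamma\setminus\{G_{i_0}\}=\Gamma_1\sqcup\Gamma_2$ with $\Gamma_2\models E$ and $\Gamma_1,F\models P$. The focused formula $G_{i_0}$ is found via the valuation $v$ that makes $P$ false and every other atom true: as the sequent is a tautology $v$ must falsify some member of $\Gamma$, and unwinding the truth clauses for $\mli$ forces that member to have the shape $C_1\mli\cdots\mli C_r\mli P$ with every $C_j$ true under $v$, whereupon one takes $E:=C_1$ and $F:=C_2\mli\cdots\mli C_r\mli P$. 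Producing the partition and verifying the two entailments is the genuinely hard part, and this is where the binary condition earns its keep: ``no atom more than twice'' is the syntactic shadow of ``every resource used at most once'', so the classical deduction of $P$ from $\Gamma$ cannot covertly reuse any $G_j$ and must split into a sub-deduction establishing the antecedent $E$ of the focused implication and a \emph{disjoint} one which, given its conclusion $F$, establishes $P$. Turning this disjointness into a rigorous combinatorial statement --- tracking how the at-most-two occurrences of each atom are routed through the deduction --- is what I expect to be the main obstacle; it is precisely the step that would collapse if some atom occurred three or more times, in harmony with the fact that contraction would then be indispensable (as in Peirce's law, which is a non-binary tautology not provable in ${\bf CL7}$).
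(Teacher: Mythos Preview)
Your overall plan coincides with the paper's: the equivalence $(ii)\Leftrightarrow(iii)$ is cited from the literature, soundness $(i)\Rightarrow(iii)$ is routine (the paper simply observes that {\bf CL7} embeds in affine logic, whose soundness for computability logic is already known, rather than inducting on derivations as you propose), and the substance lies in $(ii)\Rightarrow(i)$. Your reduction there also matches the paper's: close under substitution, uncurry via Right~$\mli$ to a binary tautological sequent $\Gamma\Rightarrow P$ with $P$ atomic, locate the focused formula $E\mli F\in\Gamma$ as the one whose \emph{head} is $P$ via the valuation sending $P$ to false and all other atoms to true, and aim to apply Left~$\mli$.

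The gap is exactly where you flag it: the Splitting Lemma. You state the required partition but do not construct it, and your projected method---``tracking how the at-most-two occurrences of each atom are routed through the deduction''---is not what the paper does and would be harder to make precise. The paper's device is short and purely semantic. First it shows that \emph{both} $\Gamma,F\Rightarrow P$ and $\Gamma\Rightarrow E$ (with the full $\Gamma$) are tautological; the second because $P$ does not occur in $E$ (binarity: $P$ already appears once as the succedent and once as the head of $F$), so any valuation falsifying $\Gamma\Rightarrow E$ extends by $P\mapsto\mbox{false}$ to one falsifying the original sequent. Then it defines, for a binary sequent $\Delta\Rightarrow G$, the \emph{relevant} formulas of $\Delta$ as the least set containing every $H\in\Delta$ whose head occurs in $G$ and closed under ``head occurs in an already-relevant formula''. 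Two one-paragraph lemmas finish the job: (a) deleting the irrelevant formulas from a binary tautological sequent leaves it tautological (set every irrelevant head to true); and (b) if two binary sequents have the same antecedent and their succedents share no atoms, their relevant-formula sets are disjoint (an easy induction on the relevance witness, using binarity at each step). Taking $\Gamma_1,\Gamma_2$ to be the relevant subsets of $\Gamma$ for $\Gamma,F\Rightarrow P$ and $\Gamma\Rightarrow E$ respectively gives disjoint multisets with $\Gamma_1,F\Rightarrow P$ and $\Gamma_2\Rightarrow E$ tautological; the induction hypothesis, Left~$\mli$, and weakening (which {\bf CL7} admits, hidden in the axiom form $\Gamma,F\Rightarrow F$) then yield $\Gamma,E\mli F\Rightarrow P$. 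This ``relevant formulas'' construction is the missing idea in your proposal.
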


\begin{proof} The equivalence between (ii) and (iii) in a stronger form which is not restricted to just $\mli$-formulas, has been proven in \cite{Cirq}.\footnote{A game-semantical soundness and completeness of the class of substitutional instances of binary tautologies was first proven with respect to Blass's game semantics in \cite{Bla92}.} So, to prove the present theorem, it would be sufficient to show that (i) implies (iii) (call this {\em soundness}) and that (ii) implies (i) (call this {\em completeness}). This will be done in the following two sections. 
\end{proof}

\section{The soundness of {\bf CL7}}
We can rewrite {\bf CL7} into a clearly equivalent system that uses {\bf one-sided sequents}, here restricted to finite multisets of formulas of classical propositional logic without $\mli$, where negation is applied only to atoms. This is done by rewriting each $\mli$-sequent $E_1,\ldots,E_n\Rightarrow F$ as $\gneg E_1,\ldots,\gneg E_n,F$, and then iteratively rewriting each (sub)formula $E\mli F$ as $\gneg E\mld F$, each subformula $\gneg(E\mld F)$ as $\gneg E\mlc\gneg F$, each subformula $\gneg (E\mlc F)$ as $\gneg E\mld\gneg F$ and each subformula $\gneg\gneg E$ as $E$. The axioms of the resulting system are all sequents of the form $\Gamma,\gneg F,F$,\footnote{Of course, it does not matter whether here and later we write $\Gamma$ or $\gneg \Gamma$, with $\gneg \Gamma$ meaning the multiset of the negations of the elements of $\Gamma$.} and the rules of inference now read as follows:

\begin{center}
\begin{picture}(280,70)
\put(228,30){$\Gamma,\hspace{2pt}\gneg E,\hspace{2pt} F$}
\put(227,50){\bf Right $\mli$}
\put(220,22){\line(1,0){55}}
\put(225,8){$\Gamma,\hspace{2pt}\gneg E\hspace{-1pt}\mld\hspace{-1pt} F$}

\put(0,30){$\Gamma,\hspace{2pt}\gneg F,\hspace{2pt} G$}
\put(100,30){$\Delta,\hspace{2pt} E$}
\put(50,50){\bf Left $\mli$}
\put(0,22){\line(1,0){122}}
\put(25,8){$\Gamma,\hspace{2pt}\Delta,\hspace{2pt}E\hspace{-1pt}\mlc\hspace{-1pt} \gneg F,\hspace{2pt} G$}
\end{picture}
\end{center}

Among several equivalent axiomatizations of the (multiplicative fragment of the) well known {\em affine logic} is the one that uses one-sided sequents in our present sense. It has the same axiom scheme $\Gamma,\gneg F,F$. And the above Right $\mli$ and Left $\mli$ rules are special cases of the $\mld$-introduction and $\mlc$-introduction rules of that system, respectively, where $\mlc,\mld$ are seen as multiplicatives.\footnote{In fact, writing $E$ instead of $\gneg E$, Right $\mli$ is simply the same as the $\mld$-introduction rule of affine logic.} Thus, understanding $E\mli F$ as an abbreviation of $\gneg E\mld F$, affine logic proves every $\mli$-formula  provable in {\bf CL7}. But, as proven in \cite{Japfin}, affine logic is sound with respect to the semantics computability logic, and the latter sees no difference between $E\mli F$ and $\gneg E\mld F$. So, clause (i) of Theorem \ref{th} implies clause (iii), as desired.

\section{The completeness of {\bf CL7}}

We define the  {\bf head} of a $\mli$-formula as follows: 
\begin{itemize}
\item Every atom is its own head.
\item The head of $E\mli F$ is that of $F$.
\end{itemize}
In other words, the head of a formula is the atom with the rightmost occurrence in the formula --- the (unique) occurrence that is not in the antecedent of any subformula.  

Consider any binary $\mli$-sequent $\Gamma\Rightarrow F$. We define the {\bf relevant formulas} of this sequent to be the elements of the smallest set $S$ such that: 
\begin{itemize}
\item Every formula of $\Gamma$ whose head occurs in $F$ is in $S$.
\item Every formula of $\Gamma$ whose head occurs in some element of $S$ is also in $S$.
\end{itemize} 
The formulas of $\Gamma$ that are not relevant will be said to be {\bf irrelevant}.

\begin{lemma}\label{l1}
Assume $\Gamma\Rightarrow F$ is a binary tautological $\mli$-sequent, and $\Delta$ is the result of deleting from $\Gamma$ all irrelevant formulas of $\Gamma\Rightarrow F$. Then the sequent $\Delta\Rightarrow F$ is also tautological (and, of course, remains binary).
\end{lemma}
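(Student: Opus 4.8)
The plan is to argue by contraposition: starting from a truth assignment that falsifies $\Delta\Rightarrow F$, I will build one that falsifies $\Gamma\Rightarrow F$, contradicting the hypothesis. So suppose $\Delta\Rightarrow F$ is not tautological, and fix a classical truth assignment $\sigma$ that makes every formula of $\Delta$ true and makes $F$ false.

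The one structural fact I will use about heads is the following: by right-associativity, every $\mli$-formula $E$ has the shape $E_1\mli\cdots\mli E_k\mli P$ with $P$ an atom, and this $P$ is precisely the head of $E$; hence any assignment making $P$ true makes $E$ true, irrespective of the values of $E_1,\ldots,E_k$. Now define $\tau$ to agree with $\sigma$ on every atom, except that $\tau$ assigns \emph{true} to every atom that is the head of some irrelevant formula of $\Gamma\Rightarrow F$. I claim $\tau$ is well-defined (it is, since all such heads are uniformly set to true) and does the job. On the one hand, every irrelevant formula $E$ is made true by $\tau$, because $\tau$ makes the head of $E$ true. On the other hand, $\tau$ and $\sigma$ agree on every atom occurring in $F$ and on every atom occurring in a formula of $\Delta$ --- this is the crux, and it is essentially a restatement of the definition of relevance: if $P=h(E)$ for an irrelevant $E$ and $P$ occurred in $F$, the first clause of the relevance definition would force $E$ into the set of relevant formulas, while if $P$ occurred in some relevant formula, the second clause would do the same; either way $E$ would not be irrelevant, a contradiction. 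Hence $\tau$ still makes $F$ false and still makes every formula of $\Delta$ true, so $\tau$ makes every formula of $\Gamma$ true and $F$ false, contradicting the tautologicity of $\Gamma\Rightarrow F$.

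I do not expect a serious obstacle; the only point requiring care is the ``crux'' above --- verifying that the switch from $\sigma$ to $\tau$ disturbs neither the truth value of $F$ nor that of any relevant formula, which amounts to pinning down exactly where the head of an irrelevant formula can occur. Two small remarks. First, the multiset nature of $\Gamma$ and $\Delta$ causes no trouble, since each irrelevant formula, counted with multiplicity, is made true via its own head. Second, binariness is in fact not needed for this particular lemma --- the argument works for arbitrary $\mli$-sequents --- and the parenthetical assertion that $\Delta\Rightarrow F$ remains binary is immediate from $\Delta$ being a sub-multiset of $\Gamma$.
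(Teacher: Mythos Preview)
Your proposal is correct and follows essentially the same route as the paper's proof: both argue by contraposition, take a falsifying assignment for $\Delta\Rightarrow F$, and force every irrelevant formula of $\Gamma$ to become true by making its head true, while checking that this leaves the truth values of $F$ and of the relevant formulas untouched. The only cosmetic difference is that the paper phrases the passage from $\sigma$ to $\tau$ as an \emph{extension} (set every atom not occurring in $\Delta\Rightarrow F$ to true) whereas you phrase it as a \emph{modification} (flip the heads of irrelevant formulas to true); your ``crux'' paragraph is exactly what justifies that these come to the same thing, and you spell it out more explicitly than the paper's ``obviously''. Your side remark that binarity is not actually used in this lemma is also correct.
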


\begin{proof} Let $\Gamma,\Delta,F$ be as above. In what follows, by a ``relevant formula'' we always mean a relevant formula of $\Gamma\Rightarrow F$. Similarly for ``irrelevant'', ``antecedent'', ``succedent''.

Suppose that  
$\Delta\Rightarrow F$ is not tautological. Consider a truth assignment that makes it false, i.e., makes $\Delta$ true and $F$ false. Extend it to all formulas of $\Gamma$ by stipulating that, if an atom does not occur in $\Delta\Rightarrow F$, it is true. Obviously the head of every irrelevant formula is true under this extended assignment and hence every irrelevant formula is true. All relevant formulas of the antecedent also remain true. And the succedent remains false. So, $\Gamma\Rightarrow F$ is false and hence 
non-tautological.
\end{proof}

\begin{lemma}\label{l2}
Assume $\Gamma\Rightarrow E$ and $\Gamma\Rightarrow F$ are binary sequents, where $E$ and $F$ do not share any atoms.  Then the sets of relevant formulas of the two sequents are disjoint. 
\end{lemma}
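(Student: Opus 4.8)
The plan is to argue by contradiction, supposing that some formula $G$ belongs to the set of relevant formulas of both $\Gamma\Rightarrow E$ and $\Gamma\Rightarrow F$, and then tracing the ``relevance chains'' back from $G$ in each of the two sequents. Recall that a formula is relevant iff its head occurs either in the succedent or in (the head-carrying chain down to) another relevant formula. So if $G$ is relevant to $\Gamma\Rightarrow E$, there is a finite sequence of formulas $G=G_0,G_1,\ldots,G_k$ in $\Gamma$ with the head of $G_{i}$ occurring in $G_{i+1}$ for each $i<k$, and the head of $G_k$ occurring in $E$; similarly, $G$ relevant to $\Gamma\Rightarrow F$ gives a chain $G=G'_0,G'_1,\ldots,G'_m$ with the head of $G'_m$ occurring in $F$. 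The goal is to exhibit an atom that occurs in both $E$ and $F$, contradicting the hypothesis that $E$ and $F$ share no atoms.

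The key observation I would exploit is the binariness of the \emph{combined} data: every atom occurs at most twice across $E\mli F$ — wait, more carefully, one should note that the two sequents share the \emph{same} antecedent $\Gamma$, so each atom occurs at most twice in $\Gamma\Rightarrow E$ and at most twice in $\Gamma\Rightarrow F$; in particular each atom occurs at most twice in $\Gamma$ alone. First I would establish the following crucial fact about a single binary sequent $\Gamma\Rightarrow F$: the head of every relevant formula actually occurs \emph{in $F$ itself}. Indeed, take a relevant formula with a relevance chain of minimal length to $F$; if the head $h$ of the top formula $G_k$ occurs in $F$ we are done, but in general we only know $h$ occurs in some $G_{k-1}$ which is itself relevant, so we need to push the occurrence of heads down the chain. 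The mechanism is: $h$ occurs in $G_{k-1}$ but $h$ is \emph{not} the head of $G_{k-1}$ (by minimality/acyclicity of the chain, together with the fact that an atom occurring twice in $\Gamma$ can be the head of at most one formula — actually the head of $G_{k-1}$ is its own distinguished occurrence), so $h$ has one occurrence in $G_k$ as its head and one occurrence in $G_{k-1}$; since an atom occurs at most twice in the sequent, $h$ does \emph{not} occur in $F$. Hmm — that seems to go the wrong way, so let me reverse the direction of the chains.

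So the corrected approach: for $\Gamma\Rightarrow F$, define chains \emph{downward from the succedent}: a formula $G$ is relevant iff its head occurs in $F$, or its head occurs in a relevant formula. Take $G$ relevant to both sequents. Its head $h_G$ occurs (as head) once in $G$, and $G\in\Gamma$, so $h_G$ occurs at most once more anywhere else. Now $h_G$ occurs in some element of $\{F\}\cup(\text{relevant formulas of }\Gamma\Rightarrow F)$ and also in some element of $\{E\}\cup(\text{relevant formulas of }\Gamma\Rightarrow E)$. These two ``extra'' occurrences of $h_G$ must be one and the same occurrence (there is only one left), hence $\{F\}\cup(\text{rel. of }\Gamma\Rightarrow F)$ and $\{E\}\cup(\text{rel. of }\Gamma\Rightarrow E)$ share a common formula $G_1\neq G$ (or $G_1=F$ lying in both, etc.). Iterating — each step producing a new common formula whose head forces the next common formula, and noting the multiset $\Gamma$ is finite and each atom is consumed — this process must terminate, and the only way it can terminate is by reaching $F$ on one side and $E$ on the other sharing an atom, or more precisely by reaching a formula that equals $E$ or $F$; a careful bookkeeping shows an atom must appear in both $E$ and $F$. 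The main obstacle, which I would spend the most care on, is setting up the induction/termination argument precisely: one must track that each relevance step ``uses up'' one of the (at most two) occurrences of the relevant atom, argue there are no cycles, and conclude that the terminus on the $E$-side and the terminus on the $F$-side are linked through a shared atom — this is where the binariness hypothesis is doing all the work and where an imprecise formulation could hide a gap.
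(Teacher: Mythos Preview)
Your core idea is sound and does lead to a proof, but you are working harder than necessary, and the part you yourself flag as the ``main obstacle'' --- termination and acyclicity of the iterated chain --- is exactly what the paper's argument gets for free.

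The paper does not argue by contradiction or build two chains and try to merge them. Instead it runs a single, asymmetric induction along the inductive definition of ``relevant formula of $\Gamma\Rightarrow E$'': for each $G$ that is relevant to $\Gamma\Rightarrow E$, it shows directly that $G$ is \emph{not} relevant to $\Gamma\Rightarrow F$. In the base case the head $P$ of $G$ occurs in $E$; since $E$ and $F$ share no atoms, $P\notin F$, and by binarity $P$ occurs in no formula of $\Gamma$ other than $G$, so $G$ fails the relevance test on the $F$-side. In the inductive step the head $P$ of $G$ occurs in some $H$ whose relevance to $\Gamma\Rightarrow E$ was established earlier; by the induction hypothesis $H$ is not relevant to $\Gamma\Rightarrow F$, and since binarity forces the only two occurrences of $P$ to be in $G$ and $H$, again $G$ cannot be relevant on the $F$-side. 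That is the whole proof.

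Comparing the two: your symmetric chain argument and the paper's induction use the same binarity observation (the head of a relevant formula has exactly one ``other'' occurrence, which pins down the next link), but by inducting on the well-founded generation of the relevance set the paper never needs to worry about cycles or about when the chain stops --- the induction principle handles that automatically. Your approach can be completed (the key missing step is that the ``next'' formula $G_{i+1}$ is \emph{uniquely} determined by binarity, so the $E$-chain and the $F$-chain literally coincide until one of them terminates, forcing the terminal head into both $E$ and $F$), but once you see the paper's two-case induction you will probably prefer it.
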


\begin{proof} Assume the conditions of the lemma. Consider an arbitrary relevant formula $G$ of $\Gamma\Rightarrow E$. Let $P$ be the head of $G$. If the reason for $G$'s relevance is that 
$P$ occurs in $E$, then (as $E$ and $F$ share no atoms) $P$ does not occur in $F$, nor does it occur in any formula of $\Gamma$ other than $G$ because of the binarity of the sequent. This, by the definition of relevance, means that $G$ is not a relevant formula of 
 $\Gamma\Rightarrow F$.

Suppose now the reason for $G$'s being a relevant formula of $\Gamma\Rightarrow E$ is that $P$ occurs in some relevant formula $H$ of  $\Gamma\Rightarrow E$. The relevance of $H$ has thus been established earlier than that of $G$ and hence, by the induction hypothesis, $H$ is not a 
relevant formula of $\Gamma\Rightarrow F$. But, in view of binarity, the only two places where $P$ occurs (whether it be within $\Gamma\Rightarrow E$ or $\Gamma\Rightarrow F$) are in $G$ and $H$. Hence $G$ cannot be a relevant formula of  $\Gamma\Rightarrow F$. 
\end{proof}

\begin{lemma}\label{l3}
{\bf CL7} proves every binary tautological $\mli$-sequent. 
\end{lemma}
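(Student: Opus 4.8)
The plan is to prove this by induction on some complexity measure of the sequent $\Gamma\Rightarrow F$ — most naturally, the total number of occurrences of $\mli$ in the sequent (counting those in $\Gamma$ and in $F$). The base case is when there are no implication symbols at all: then $F$ is an atom, every formula in $\Gamma$ is an atom, and a binary tautological sequent of atoms $P_1,\dots,P_n\Rightarrow Q$ must have some $P_i$ equal to $Q$ (otherwise set $Q$ false and all $P_i$ true to falsify it); so the sequent is an axiom $\Gamma',Q\Rightarrow Q$ of {\bf CL7}. For the inductive step, first apply Lemma~\ref{l1} to delete all irrelevant formulas: it suffices to prove the reduced sequent $\Delta\Rightarrow F$, since weakening back the irrelevant formulas is built into the {\bf CL7} axiom scheme and is also derivable via Left~$\mli$ in the general case — actually, more carefully, once $\Delta\Rightarrow F$ is proven we recover $\Gamma\Rightarrow F$ because the axioms already allow arbitrary side formulas and both rules pass side multisets through, so adding the irrelevant formulas back as extra antecedent material is harmless. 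So assume from now on that every formula of $\Gamma$ is relevant.

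Next I would split on the shape of the succedent $F$. \emph{Case 1: $F = F_1\mli F_2$.} Then applying Right~$\mli$ in reverse, it suffices to prove $\Gamma, F_1\Rightarrow F_2$, which has one fewer $\mli$ and is still binary; and it is still tautological because $\Gamma\Rightarrow F_1\mli F_2$ and $\Gamma,F_1\Rightarrow F_2$ are the same classical formula up to associativity of $\mlc$ and the equivalence $A\mli(B\mli C)\equiv (A\mlc B)\mli C$. Apply the induction hypothesis. \emph{Case 2: $F$ is an atom $P$.} Since $\Gamma\Rightarrow P$ is tautological and $P$ is the head of $F$, by the definition of relevance and binarity there is exactly one formula $E\in\Gamma$ with head $P$; it cannot be the atom $P$ itself (else we would be in a near-axiom situation handled directly), so write $E = E_1\mli E_2$, where the head of $E_2$ is $P$. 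I want to apply Left~$\mli$ to this occurrence of $E$: I must exhibit a partition $\Gamma\setminus\{E\} = \Gamma_1,\Gamma_2$ such that $\Gamma_2\Rightarrow E_1$ and $\Gamma_1, E_2\Rightarrow P$ are both binary tautological, and then invoke the induction hypothesis on each (each has strictly fewer $\mli$'s, since $E$ contributed at least one which is now gone and $E_1,E_2$ are proper subformulas). The natural choice of $\Gamma_2$ is the set of relevant formulas of $\Gamma\setminus\{E\}\Rightarrow E_1$; Lemma~\ref{l2}, applied with the observation that $E_1$ and $E_2$ (hence $E_1$ and $P$) share no atoms by binarity of $E$, guarantees that $\Gamma_2$ is disjoint from the relevant formulas of the other branch, so the partition is well-defined. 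The tautologicity of the two branches should then follow from Lemma~\ref{l1} together with a truth-assignment argument analogous to the one in Lemma~\ref{l1}'s proof: a falsifying assignment for either branch, extended by making absent atoms true, can be glued into a falsifying assignment for $\Gamma\Rightarrow P$, contradicting its being a tautology.

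The main obstacle I anticipate is exactly this gluing/partition argument in Case~2: verifying that the two premises $\Gamma_2\Rightarrow E_1$ and $(\Gamma\setminus\{E\})\setminus\Gamma_2, E_2\Rightarrow P$ are genuinely tautological, and that the split is exhaustive (every formula of $\Gamma\setminus\{E\}$ lands in one side or is irrelevant to both and can be dropped by Lemma~\ref{l1}). The delicate point is that binarity is what makes atoms "non-interfering" across the two branches — each atom of the sequent occurs at most twice, so it cannot simultaneously create a dependency on both sides — and one must track carefully that the relevance-closure used to define $\Gamma_2$ does not accidentally pull in a formula needed on the $P$-side. I expect that a clean way to organize this is: let $\Gamma_2$ be the relevant formulas of $(\Gamma\setminus\{E\})\Rightarrow E_1$ and $\Gamma_1$ be everything else in $\Gamma\setminus\{E\}$; use Lemma~\ref{l2} for disjointness of the "active" parts, use Lemma~\ref{l1} on each branch to discard whatever is irrelevant there, and finish each branch by the induction hypothesis. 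The atom-counting and the two small truth-assignment lemmas do all the real work; the rest is bookkeeping.
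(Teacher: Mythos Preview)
Your plan is correct and follows essentially the same route as the paper's proof: induction on the number of connectives, reduce to an atomic succedent $P$ via Right~$\mli$, locate an antecedent formula with head $P$ (the paper gives the same falsifying-assignment argument you sketch), show the two premise sequents are tautological, invoke Lemmas~\ref{l1} and~\ref{l2} to split the antecedent into disjoint relevant parts, apply the induction hypothesis and Left~$\mli$, then weaken back. The only cosmetic difference is that the paper defers the irrelevant-formula deletion to the moment of applying Left~$\mli$ rather than doing it upfront, and it takes $\Gamma_1$ directly as the relevant formulas of $\Gamma,F\Rightarrow P$ rather than as the complement of $\Gamma_2$; these are equivalent choices.
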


\begin{proof} Consider an arbitrary binary tautological sequent. We may assume that its succedent is an atom $P$, for otherwise, if the succedent is $E\mli F$, move $E$ to the antecedent of the sequent, and repeat the same until the succedent has become atomic; in view of the presence of Right $\mli$ in {\bf CL7}, provability of the resulting sequent implies provability of the original one. 

If $P$ is one of the formulas of the antecedent, then the sequent we deal with is an axiom and thus {\bf CL7} proves it.

Otherwise, the antecedent should contain a formula $E\mli F$ whose head is $P$, or else the sequent could be falsified by the truth assignment which makes $P$ false and makes all other atoms true. Thus, the sequent we are talking about looks like $\Gamma, E\mli F\Rightarrow P$, where $P$ occurs in $F$ and hence occurs in neither $E$ nor $\Gamma$, as the sequent is binary.  Obviously the tautologicity of this sequent implies the tautologicity of $\Gamma,F\Rightarrow P$. Since $E$ does not contain $P$, the tautologicity of $\Gamma, E\mli F\Rightarrow P$ also implies the tautologicity of $\Gamma \Rightarrow E$. Indeed, assume that some truth assignment falsifies $\Gamma \Rightarrow E$. Extend that assignment to all atoms of $\Gamma, E\mli F\Rightarrow P$ in such a way that it makes $P$ false. Obviously such an extended assignment falsifies $\Gamma, E\mli F\Rightarrow P$, contradicting our assumption that this sequent is tautological.  Thus, $\Gamma, F\Rightarrow P$ and $\Gamma\Rightarrow E$ are binary tautological sequents, and their succedents 
do not share any atoms. Let $\Gamma_1$ and $\Gamma_2$ be the submultisets of $\Gamma$ consisting of all relevant formulas of $\Gamma, F\Rightarrow P$ and $\Gamma\Rightarrow E$, respectively. By Lemma \ref{l2}, $\Gamma_1$ and $\Gamma_2$ are disjoint. Also, by Lemma \ref{l1}, $\Gamma_1, F\Rightarrow P$ and $\Gamma_2\Rightarrow E$ are tautological. Hence, by the induction hypothesis (where induction is on the number of connectives occurring the sequent), 
these two sequents are provable. Then, by Left $\mli$, the sequent $\Gamma_1,\Gamma_2, E\mli F\Rightarrow P$ is also provable. This can be easily seen to imply the provability of the original sequent  
$\Gamma, E\mli F\Rightarrow P$, as {\bf CL7} is obviously closed under the weakening rule ``from $\Delta\Rightarrow G$ conclude $\Delta,H\Rightarrow G$''.\footnote{In the present version of {\bf CL7}, weakening is ``hidden'' in axioms. Alternatively, we could have chosen the axioms of {\bf CL7} to be just $F\Rightarrow F$, with weakening explicitly stipulated as one of the inference rules. It is known that either choice yields the same set of provable formulas, whether it be classical, affine or intuitionistic logic.} 
\end{proof}

In view of the evident fact that {\bf CL7} is closed under substitution of atoms by whatever formulas, Lemma \ref{l3} immediately implies the desired conclusion that, whenever $H$ is a $\mli$-formula which is an instance of some binary tautology, $H$ is provable in {\bf CL7}. 

\section{The three versions of weak reduction}\label{s5}
As noted in Section 1, the three weak reduction operations $\pintimpl$, $\bintimpl$ and $\intimpl$ can be defined in terms of $\mli$ and the corresponding three {\em recurrence} operations 
$\pst,\bst,\st$ by 
\[A\pintimpl B=_{def} \pst A\mli B;\ \ \ A\bintimpl B =_{def} \bst A\mli B; \ \ \ A\intimpl B =_{def} \st A\mli B.\]
(Recurrences have the highest precedence, so $\pst A\mli B$ should be read as $(\pst A)\mli B$, and similarly for $\bst,\st$.) We refer to $\pst$ as {\bf parallel recurrence}, and refer to $\bst$ and $\st$ as {\bf branching recurrences}. Namely, $\bst$ can be called {\bf countable} branching recurrence, and $\st$ called {\bf uncountable} branching recurrence. $\st$ and $\pst$ have been defined in some earlier literature on computability logic (see, e.g., \cite{Japfin}). On the other hand, $\bst$, as a full-fledged citizen of computability logic, is first officially introduced in the present  paper (see also ``Historical remarks'' at the end of this section).  

Let us start with taking a closer (than done in Section 1) intuitive look at how $\st$ and $\pst$ compare.  
Imagine a computer that has a program successfully playing $\chess$. The resource that such a computer provides is obviously stronger than just $\chess$: a reasonable operating system would allow to simultaneously run as many parallel sessions of $\chess$ as the user needs, while  
$\chess$, as such, only assumes a single play.  This is what is captured by the parallel recurrence $\pst\chess$.  A more advanced operating system, however, would in addition also make it possible to branch/replicate each particular stage of each particular session, i.e. create any number of ``copies" of any already reached position
of the multiple parallel plays of $\chess$, thus giving the user 
the possibility to try different continuations from the same position. What corresponds to this intuition is the branching recurrence $\st\chess$. 

As noted earlier, the user of the resource $\st A$  does not have to restart $A$ from the very beginning every time it wants to reuse it; rather, it is (essentially) allowed to backtrack to any of the previous --- not necessarily starting --- positions and try a new continuation from there, thus depriving the adversary of the possibility to reconsider the moves it has already made in that position. This is in fact the type of reusage every purely software resource allows or would allow in the presence of an advanced operating system and unlimited memory:
one can start running process $A$; then fork it  
at any stage  thus creating two threads  that have a common past but possibly diverging futures  (with the possibility to treat one of the threads as 
a ``backup copy'' and preserve it for backtracking purposes); then further fork any of the branches at any time; and so on. The less flexible type of reusage of $A$ assumed by $\pst A$, on the other hand, is closer to what infinitely many autonomous 
physical resources would naturally offer, such as an unlimited number of independently acting robots each performing task $A$, or an unlimited number of computers with limited memories, each one only capable of and responsible for running a single thread 
of process $A$. Here  the effect of replicating/forking an advanced stage of $A$ cannot be achieved unless, by good luck, 
there are two identical copies of the stage, meaning that the corresponding two robots or computers have so far acted in precisely the same ways.

The difference between the countable ($\bst$) and uncountable ($\st$) versions of branching recurrence appears to be much more subtle than the difference between the parallel ($\pst$) and branching ($\st$ or $\bst$) sorts of recurrence. In fact, the above intuitive-level discussion of $\pst$ vs. $\st$ is just as valid for $\pst$ vs. $\bst$. Yet, $\st$ and $\bst$ turn out to induce dramatically different logics, even if those logics coincide 
when $\bintimpl$ or $\intimpl$ (or $\pintimpl$) is the only connective in the logical vocabulary.
The following is an example of a principle which could be shown to be valid with $\st$ but invalid with $\bst$ as well as with $\pst$:
\[\st\cost P\mli \cost\st P\]
($\cost$ abbreviates $\gneg\st\gneg$, where $\gneg$ is the ``role switch'' operation). And, as we started discussing differences between the principles validated by the different sorts of recurrences, here comes an example of a principle which can be shown to be valid with $\pst$ but invalid with either $\st$ or $\bst$: 
\[P\mlc \pst(P\mli Q\mlc P)\mli \pst Q\]
($\mlc$, called {\em parallel conjunction}, is a computability-logic counterpart of the tensor of linear logic. $A\mlc B$ means a parallel play of $A$ and $B$, where $\pp$ has to win in both plays to be the winner in the overall game). These are just isolated examples, and finding a systematic deductive characterization of all valid principles that involve recurrence operations remains a great challenge in computability logic.

Before we move to more examples illustrating differences between $\pst,\bst$ and $\st$,  it would be a good idea to first define the three operations under question.    

Formally, $\pst A$ is defined as the infinite conjunction $A\mlc A\mlc A\mlc\ldots$, where $A_0\mlc A_1\mlc A_2\mlc\ldots$ is a straightforward generalization of the just-mentioned parallel conjunction $\mlc$ operation from the binary case to the infinite case. 

Defining the branching recurrences takes more work.  In semiformal terms, a play of $\st A$  starts as an ordinary play of game $A$. At any time, however, player $\oo$ is allowed to make a ``replicative move'', which creates two copies of the current position $\Phi$ of $A$. From that point on, the game turns into two games played in parallel, each continuing 
from position $\Phi$. We use the bits $0$ and $1$ to denote those two threads, that --- using our earlier words --- have a common past (position $\Phi$) but possibly diverging futures. Again, at any time, $\oo$ can further branch either thread, creating two copies of the current position in that thread. If thread $0$ was branched, the resulting two threads will be denoted by $00$ and $01$; and if the branched thread was $1$, then the resulting threads will be denoted by $10$ and $11$. And so on: at any time, $\oo$ may split any of the existing threads $w$ into two threads $w0$ and $w1$. Each thread in the eventual run of the game will be thus denoted by a (possibly infinite) bit string. The game is considered won by $\pp$ if it wins $A$ in each of the threads; otherwise the winner is $\oo$. 

To each infinite bit string $w$ may thus correspond a separate run of $A$ in thread (represented by) $w$ and, as there are uncountably many infinite bit strings, uncountably many parallel runs of $A$ may be generated when playing $\st A$ up. Let us call a bit string $w$ {\bf essentially finite} if it contains only a finite number of $1$s; otherwise we say that $w$ is {\bf essentially infinite}. We extend these terms from bit strings to the corresponding threads in the play of $\st A$. The definition of $\st A$ thus requires from $\pp$ to win $A$ in all --- whether they be essentially finite or essentially infinite --- threads. All it takes to turn that definition into a definition of $\bst$ is to relax that requirement and, when determining the winner, only look at essentially finite threads. Since there are only countably many essentially finite bit strings, only countably many runs of $A$ are generated --- more precisely, only countably many runs of $A$ are of relevance --- in $\bst A$. This completes our semiformal definition/explanation of $\bst$.

In fully formal terms, both $\st A$ and $\bst A$ have the same structures ({\bf Lr} components). There are two types of legal moves in (legal) positions  of either game: (1) replicative and (2) non-replicative. To define these, let us agree that by an {\em active node}\footnote{Intuitively, an active node is (the name of) an already existing thread of a play over $A$.} of a position $\Phi$ we mean a bit string $w$ such that $w$ is either empty,\footnote{Intuitively, the empty string is the name/address of the initial thread; all other threads will be descendants of that thread.} or else is $u0$ or $u1$ for some bit string $u$ such that $\Phi$ contains the move $\col{u}$. A replicative move can only be made by (is only legal for) $\oo$, and such a move in a given position $\Phi$ should be $\col{w}$, where $w$ is an active node of $\Phi$ and $\Phi$ does not already contain the same move $\col{w}$.\footnote{The intuitive meaning of move $\col{w}$ is splitting thread $w$ into $w0$ and $w1$, thus ``activating'' these two new nodes/threads.} As for non-replicative moves, they can be made by either player. Such a move by a player $\xx$ in a given position $\Phi$ should be $w.\alpha$, where   $w$ is an active node of $\Phi$ and $\alpha$ is a move such that, for any infinite bit string $v$, $\alpha$ is a legal move by $\xx$ in position $\Phi^{\preceq wv}$ of $A$.\footnote{The intuitive meaning of such a move $w.\alpha$ is making move $\alpha$ in thread $w$ and all of its (current or future) descendants.} Here  and later, for a run $\Theta$ and bit string $x$, $\Theta^{\preceq x}$ means the result of deleting from $\Theta$ all moves except those that look like $u.\beta$ for some initial segment $u$ of $x$, and then further deleting the prefix ``$u.$'' from such moves.\footnote{Intuitively, $\Theta^{\preceq x}$ is the run of $A$ that has been played in thread $x$, if such a thread exists (has been  generated); otherwise, $\Theta^{\preceq x}$ is the run of $A$ that has been played in (the unique) existing thread which (whose name, that is) is
some initial segment of $x$.} As for the contents (the {\bf Wn} components) of these games, a legal run $\Gamma$ of $\st A$ is considered won by $\pp$ iff, for every infinite bit string $v$, $\Gamma^{\preceq v}$ is a $\pp$-won run of $A$.  And a legal run $\Gamma$ of $\bst A$ is considered won by $\pp$ iff, for every infinite but essentially finite bit string $v$, $\Gamma^{\preceq v}$ is a $\pp$-won run of $A$. This completes our definition of $\st$ and $\bst$.

As we just saw, the definition of $\bst$ is obtained from the definition of $\st$ by merely inserting the words ``but essentially finite''. But, again, trying to analyze the rather technical definition given in the above paragraph may not be a good idea for a reader of this paper. Relying, instead, on the informal explanations that we provided should be sufficient.

To see the distance between $\st$ and $\bst$, following Vereshchagin \cite{Ver}, let us consider any set $S$ of natural numbers (identified with their decimal representations), such that 
$S$ is not recursively enumerable. Let $A$ be the game where 
only $\pp$ has legal moves, each legal move being a(ny) natural number. A given run of this game is considered won by $\pp$ iff the set of the moves it makes in it equals 
$S$. In other words, $\pp$ wins iff it enumerates $S$. Now let us look at the games $\cost A$ and $\cost^{\aleph_0} A$, where $\cost=\gneg \st\gneg$ and $\cost^{\aleph_0}=\gneg\st^{\aleph_0}\gneg$. That is, $\cost A$ is the same as $\st A$, only here it is $\pp$ rather than $\oo$ who can create new threads (make replicative moves), 
and whose adversary needs to win $A$ in each of the threads to be the winner in the overall game. Similarly for $\cost^{\aleph_0}$.  Obviously $\pp$ has an effective winning strategy for $\cost A$, consisting in enumerating all of the 
$2^{\aleph_0}$ sets of natural numbers, one per each of the $2^{\aleph_0}$ threads that it may create in $\cost A$. On the other hand, $\pp$ does not have an effective winning strategy for $\cost^{\aleph_0} A$. Otherwise, one would be able to recursively enumerate $S$ by selecting the (essentially finite) bit string $w$ representing a winning thread, and then listing the moves made in that thread. 

Our discussion would not be complete without also seeing a specific example illustrating the distance between the parallel and branching versions of recurrence.  The game $\pcost A$, which is a dual of $\pst A$ in the same sense as $\cost A$ is a dual of $\st A$, is defined as $A\mld A\mld A\mld\ldots$. This can be thought of as a parallel play of game $A$ on infinitely many boards: \#0, \#1, \#2, \ldots. 
$\oo$ wins it iff it wins $A$ on each of the boards.  Where $f(x)$ is a total function from natural numbers to natural numbers, $\ada x\ade y \bigl(y=f(x)\bigr)$ denotes a game every legal run of which consists of (at most) two moves.\footnote{In computability logic, $\ada$ is called {\em choice universal quantifier}, and $\ade$ called {\em choice existential quantifier}. The smaller versions $\adc$ and $\add$ of the same symbols stand for what are called {\em choice conjunction} and {\em choice disjunction}, respectively. The choice operators of computability logic are reminiscent of the additive operators of linear logic.} The first move is by $\oo$, and the move is an arbitrary number $m$. The second move is by $\pp$, who should name a number $n$. $\pp$ wins iff $n$ equals $f(m)$. $\pp$ has an effective winning strategy that works for both $\cost\ada x\ade y\bigl(y=f(x)\bigr)$ and $\cost^{\aleph_0}\ada x\ade y\bigl(y=f(x)\bigr)$. It consists in waiting till the adversary makes a move $m$, after which $\pp$ creates infinitely (but countably) many threads, and tries all possible responses --- all possible values for $n$, that is --- in those threads, one response per thread. Similarly, where $B(x)$ is a predicate, $\ada x(\gneg B(x)\add B(x))$ denotes a game where the first move (again), consisting of choosing a number $m$, is by $\oo$. The second move is by $\pp$, who should choose between $0$ and $1$. $\pp$ wins iff $B(m)$ is false and $0$ was chosen, or $B(m)$ is true and $1$ was chosen. $\pp$'s effective winning strategy for both $\cost \ada x(\gneg B(x)\add B(x))$ and $\cost^{\aleph_0} \ada x(\gneg B(x)\add B(x))$ is that it waits till  the adversary makes a move $m$, after which $\pp$ creates two threads, making move $0$ in one thread and move $1$ in the other thread. The same trick, however, fails with $\pcost \ada x(\gneg B(x)\add B(x))$. For example, it fails when $\oo$ chooses 
$m=0$ on board \#0, $m=1$ on board \#1, $m=2$ on board \#2, etc.  Let us call this strategy of $\oo$ the {\em diversifying strategy}. Now, for any effective strategy $\cal M$ of $\pp$, using diagonalization, we can construct a particular predicate $B(x)$ such that $\pp$ loses $\pcost \ada x(\gneg B(x)\add B(x))$
against the diversifying strategy. Namely, we can define $B(i)$ (any $i$) to be true if $\cal M$ makes the move $0$ on board \#$i$ when playing against the diversifying strategy, and false otherwise. This guarantees that $\cal M$'s all responses to the adversary's moves are ``wrong''. A similar idea could be employed in showing that, for an appropriately selected $f$, the problem $\pcost\ada x\ade y\bigl(y=f(x)\bigr)$ has no algorithmic solution.\vspace{5pt}

{\bf Historical remarks.} Blass \cite{Bla72} was apparently the first to consider an operator in the style of the exponential operator $!$ of linear logic. He called it the {\em repetition operator} $R$. The game-semantical context in which $R$ was introduced was limited compared with the context that computability logic operates in. The main contextual difference is that Blass's games are {\em strict}, meaning games where in each position only one player may have (legal) moves. Computability logic, on the other hand, deals with the already mentioned more general type of {\em static games}. As opposed to strict games, static games are {\em free}, in the sense that generally both players may have legal moves in a given position. Furthermore, the recurrence operations 
(as well as the non-recurrence parallel operations $\mlc,\mld,\mla,\mle$) of computability logic generate properly free games even when applied to strict 
games, while Blass's operations, of course, preserve the strict property of games. However, if we disregard this difference and try to bring Blass's games and 
static games to some reasonable common denominator, Blass's repetition operation $R$ would apparently translate (whatever ``translate'' should precisely mean here) into $\bst$. The reason why it would not translate into $\pst$ is that $R$ is 
a branching operation in the proper sense, allowing effects such as backtracking. And the reason why $R$ would be less than an adequate counterpart of 
$\st$ is that $\st A$ allows $\bot$ to try a continuum of different runs of $A$, while that quantity is automatically limited to $\aleph_0$ in $R A$ (and artificially limited to $\aleph_0$ in $\bst A$, as we saw from the definition).

\section{Implicative intuitionistic logic} 
Where $\supset$ is one of the operators $\pintimpl$, $\bintimpl$ or $\intimpl$, a Gentzen-style axiomatization of the corresponding {\em implicative} (fragment of) {\em intuitionistic logic}, denoted by  ${\bf Int}^\supset$,  is {\bf CL7} --- only with $\supset$-sequents instead of $\mli$-sequents, of course --- plus the following single additional rule

\begin{center}
\begin{picture}(55,70)
\put(0,30){$\Gamma,\hspace{2pt}E,\hspace{2pt}E\ \Rightarrow \ F$}
\put(2,50){\bf Contraction}
\put(0,22){\line(1,0){63}}
\put(7,8){$\Gamma,\hspace{2pt}E\ \Rightarrow \ F$}
\end{picture}
\end{center}

Alternatively, ${\bf Int}^\supset$ could be chosen to be formulated exactly as {\bf CL7}, with the only difference that the antecedents of sequents in ${\bf Int}^\supset$ are seen as sets rather than multisets of formulas, which eliminates the need for explicitly stating contraction as an inference rule. 

\begin{theorem}\label{th1}
Let $\supset$ be any one of the operators $\pintimpl$, $\bintimpl$ or $\intimpl$. 
For any $\supset$-formula $H$, the following conditions are equivalent:

(i) \ ${\bf Int}^{\supset}\vdash H$.

(ii) $H$ is valid in computability logic, whether it be in the ordinary sense of validity or in the sense of 
uniform validity.   
\end{theorem}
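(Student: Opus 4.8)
The plan is to prove the two implications (i)$\Rightarrow$(ii) (``soundness'') and (ii)$\Rightarrow$(i) (``completeness''). For (i)$\Rightarrow$(ii) it suffices to establish the stronger statement that ${\bf Int}^{\supset}\vdash H$ implies the \textbf{uniform} validity of $H$, and for (ii)$\Rightarrow$(i) it suffices to assume only \textbf{ordinary} validity; together with the trivial fact that uniform validity entails ordinary validity, this yields the full equivalence. For $\supset\in\{\pintimpl,\intimpl\}$ both directions are already available from \cite{Japjsl}, and since the inference system ${\bf Int}^{\supset}$ is literally the same system for all three choices of $\supset$, the entire proof-theoretic bookkeeping (axioms, the shapes of derivations, substitution closure as in Lemma \ref{l3}) can be taken over from that earlier treatment. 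Hence the real work concerns only the new operator $\supset={\bintimpl}$, and amounts to re-verifying, for $\bst$ in place of $\pst$ or $\st$, the handful of game-semantical facts on which soundness and completeness rest.

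For soundness, ${\bf Int}^{\supset}$ is {\bf CL7} with $\mli$-sequents replaced by $\supset$-sequents and the Contraction rule added. The axioms $\Gamma,F\Rightarrow F$ and the rules Right $\supset$ and Left $\supset$ never require using a resource more than once, so their soundness follows by the same arguments that underlie the soundness of intuitionistic computability logic for $\pst$ and $\st$, once one records the companion facts $A\models\bst A$, $\bst A\models\bst\bst A$ and $\bst(E\mlc F)\models\bst E\mlc\bst F$ (uniformly) — each proved by a routine thread-tree manipulation for $\bst$, checking that the operation in question preserves essential finiteness of threads. The one genuinely new ingredient is the soundness of Contraction, i.e.\ that a uniform solution of $\Gamma,E,E\Rightarrow F$ can be turned into one of $\Gamma,E\Rightarrow F$; this reduces to the fact that $\pp$ can, from a single copy of $\bst A$, faithfully simulate two copies of $\bst A$, witnessed by the ``thread-tree splitting'' strategy: moves concerning threads whose name begins with $0$ are routed into the first virtual copy, those beginning with $1$ into the second, and since prepending a bit preserves essential finiteness, the winning conditions of the two virtual $\bst A$'s line up with that of the real one. (The same device with even/odd boards handles $\pst$, and with essentially-infinite threads also tracked it handles $\st$ — which is why the earlier results go through.)

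For completeness, take an $H$ with ${\bf Int}^{\bintimpl}\not\vdash H$. As ${\bf Int}^{\bintimpl}$ is the implicative fragment of intuitionistic logic, $H$ is refuted by some finite Kripke model $K$. The plan is to reuse, or lightly adapt, the Kripke-model-driven interpretation of \cite{Japjsl}: assign to each atom a suitably elementary game built from $K$, and re-prove the key lemma that $\pp$ has a winning strategy for $H^{\ast}$ iff $H$ is forced at the root of $K$ — now with $\supset$ read as $\bintimpl$. The crux of that re-proof is that all games arising in the construction (the atom interpretations and the compounds of them occurring inside $H^{\ast}$) are of low enough interactive complexity that $\pintimpl$, $\bintimpl$ and $\intimpl$ are interchangeable on them as far as winnability is concerned — the same phenomenon that, as noted in Section 1, makes $\pintimpl$-, $\bintimpl$- and $\intimpl$-reducibility of traditional problems all coincide with Turing reducibility. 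Failure of uniform validity then follows a fortiori, so both senses of validity fail, and substitution closure (as for {\bf CL7}) extends the conclusion from sequents to all $\bintimpl$-formulas.

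The step I expect to be the main obstacle is exactly this last verification: making sure that the interpretation extracted from a Kripke counter-model really defeats $\pp$ for $\bintimpl$ and not merely for $\intimpl$ or $\pintimpl$. The danger is that the countable branching of $\bst$ hands $\pp$ just enough extra reusage power to validate some non-intuitionistic principle under the intended interpretation — recall from Section \ref{s5}, following \cite{Ver}, that $\bst$ and $\st$ already part ways at the level of individual principles involving genuinely interactive games. So one must either confirm that the \cite{Japjsl} construction stays safely inside the region where the three recurrences are indistinguishable, or else design a fresh family of counter-games that is simultaneously simple enough to keep $\bst$-reduction behaving intuitionistically and expressive enough to separate forcing at the different worlds of $K$; carrying out that check (and, if necessary, the redesign) is where the technical weight of the proof lies.
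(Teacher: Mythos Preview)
Your plan is correct and is exactly the paper's approach: for $\pintimpl$ and $\intimpl$ cite \cite{Japjsl}, and for $\bintimpl$ observe that the \cite{Japjsl} proof goes through unchanged because no step distinguishes the countable from the uncountable version of branching recurrence. The paper's entire proof is literally that one-sentence observation, with no detailed re-verification spelled out; the obstacle you anticipate on the completeness side does not materialize (the Kripke-model construction of \cite{Japjsl} never appeals to anything separating $\st$ from $\bst$), and on the soundness side the paper's Historical Remarks point out that the key lemma $\st A\mli\st\st A$ is in fact dramatically \emph{easier} for $\bst$ than for $\st$, so the $\bintimpl$ case is the lighter one, not the heavier one. (Your reference to Lemma~\ref{l3} is a small slip: that lemma concerns {\bf CL7} and binary tautologies, not ${\bf Int}^{\supset}$.)
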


\begin{proof} For $\hintp$ and $\hint$, this theorem was officially established in \cite{Japjsl}. As for $\hintb$, as it happens, the proof of the soundness and completeness of $\hint$ given in \cite{Japjsl}, in fact, is also a proof of the soundness and completeness of $\hintb$: a simple re-reading of that proof reveals that virtually no step in it relies on the fact that we deal with the uncountable rather than the countable version of reduction. \end{proof} 

{\bf Historical remarks and further discussions.} The above-mentioned result of \cite{Japjsl} for $\hint$ was further strengthened in \cite{Propint}, where  soundness and completeness (with respect to the semantics of computability logic) was proven for the full propositional fragment of intuitionistic logic.
With only one or two months' delay, Vereshchagin \cite{Ver} came up with an alternative and shorter proof of the same result. It should be noted, however, that in his work Vereshchagin  modified the ``canonical'' definitions of computability logic quite a bit, which essentially resulted in interpreting intuitionistic implication as $\bintimpl$ rather than $\intimpl$. Moreover, in an attempt to simplify things, Vereshchagin further limited games to strict ones, essentially defining the 
$\bst$ component of $\bintimpl$ as something closer to Blass's repetition operator $R$ than to $\bst$ in our present precise sense.
In view of (and despite) the above-said, Vereshchagin's proof, with certain technical adjustments, can be considered an alternative proof of the $\hintb$ part of Theorem \ref{th1}.  

The soundness proof for $\hint$ found in \cite{Japjsl}, in fact, can be dramatically simplified when we are concerned with 
$\bintimpl$ rather than $\intimpl$. Specifically, a lemma on which the soundness proof given in \cite{Japjsl} (as well as similar proofs given in \cite{Bla72} and \cite{Ver}) relies is about the validity of the principle $\st A\mli \st\st A$. And a strict proof of that lemma, presented in \cite{Japfin}, takes several pages.  On the other hand, a proof of the validity of the same principle 
for $\bst$ instead of $\st$ would not take more than just a paragraph, as it did (for Blass's or Vereshchagin's versions of $\bst$) in \cite{Bla72} or \cite{Ver}.   

The above-said also applies to the proof of the soundness and completeness of the full propositional intuitionistic logic given in 
\cite{Propint}. That proof officially is for the case when the intuitionistic implication is read as $\intimpl$. However, the same proof is just as good 
for $\bintimpl$ as well. Similarly, Vereshchagin's \cite{Ver} completeness proof can be adapted to either interpretation $\intimpl$, $\bintimpl$ of intuitionistic implication. The same cannot be said about Vereshchagin's soundness proof though: as noted above, proving soundness when intuitionistic implication is read as $\intimpl$ rather than $\bintimpl$ takes considerably greater efforts.  

Finally, for reasons similar to the above, the soundness proof for the full first-order intuitionistic calculus given in \cite{int1}, is equally good for either reading $\intimpl,\bintimpl$ of intuitionistic implication.

\section{Conclusion} Computability logic is a semantically conceived approach with the ambition to be ``a formal theory of computability in the same sense as classical logic is a formal theory of truth'' (\cite{Japtocl1}).  As such, it does not yet have a sufficiently developed syntax, and among the main current  objectives of computability logic as a research program is to find axiomatizations for various natural fragments of the set of formulas validated by its semantics. The language of computability logic, with logical operators standing for operations on computational problems, is very rich and, in fact, open-ended. Identifying the most natural and potentially useful new operators to be included in it is another direction on which efforts within the project continue to be focused. 

The present paper contributes to both of the above directions. Within the second direction, it officially introduces the Blass-style (\cite{Bla72,Bla92}) {\em countable recurrence} operator $\bst$ and the associated reduction operator $\bintimpl$. It also outlines the idea of finite and bounded versions of recurrence operators together 
with the associated reduction operators. The three  other main reduction  operators $\mli,\pintimpl$ and $\intimpl$ studied in the paper have been introduced earlier. The variety of reduction operators captures various flavors of our intuition of algorithmically reducing one computational problem to another, with $\mli$ being the strongest form of reduction and $\intimpl$ being the weakest form.  

Within the first direction, this paper establishes two results. According to one result, a greater part of which was established earlier, the logic induced by each of $\pintimpl,\bintimpl,\intimpl$ is exactly the implicative fragment of Heyting's intuitionistic calculus. And, according to the other theorem, the logic induced by $\mli$ is the same calculus but with the contraction rule removed. The philosophical summary of these results is that, despite the significant semantical varieties within the basic group of reduction operations, they (when taken in isolation) only generate two kinds of logical behavior, depending on whether resource/antecedent reusage is allowed ($\pintimpl,\bintimpl,\intimpl$) or not ($\mli$). Syntactically, those two behaviors are precisely accounted for 
by the mere presence or absence of contraction in Gentzen-style axiomatizations.

\end{document}